\newcommand{\cardinality}[1]{\ensuremath{\lvert#1\rvert}}
\newcommand{\EE}{\ensuremath{\mathbb{E}}}
\newcommand{\pr}[1]{\ensuremath{\mathbb{P}\left( #1 \right)}}
\newcommand{\indicator}[1]{\ensuremath{\mathds{1}_{\left\{ #1 \right\}}}}
\newcommand{\mspcc}{\ensuremath{\;}}
\newcommand{\mspcd}{\ensuremath{\quad}}
\newcommand{\mset}[1]{\ensuremath{\mathcal{#1}}}
\newcommand{\mvec}[1]{\ensuremath{\textbf{#1}}}
\newcommand{\mfield}[1]{\ensuremath{\mathbb{#1}}} 
\newcommand{\mDefine}{\ensuremath{ \stackrel{\triangle}{=} }}
\newcommand{\mZZ}{\ensuremath{\mathbb{Z}}}
\newtheorem{lemma}{Lemma}
\newtheorem{theorem}{Theorem}
\newtheorem{definition}{Definition}
\newtheorem{proposition}{Proposition}
\newtheorem{remark}{Remark}
\newcommand{\XX}{\ensuremath{\mvec{X}}}
\newcommand{\xx}{\ensuremath{\mvec{x}}}
\newcommand{\YY}{\ensuremath{\mvec{Y}}}
\newcommand{\yy}{\ensuremath{\mvec{y}}}
\newcommand{\NN}{\ensuremath{\mvec{N}}}
\newcommand{\zz}{\ensuremath{\mvec{z}}}
\newcommand{\DD}{\ensuremath{\mvec{D}}}
\newcommand{\ww}{\ensuremath{\mvec{w}}}
\renewcommand{\mid}{\middle\vert}
\begin{document}

\allowdisplaybreaks

\title{The Importance of Tie-Breaking in Finite-Blocklength Bounds}

\author{
\authorblockN{Eli Haim}
\authorblockA{Dept. of EE-Systems,
TAU\\
Tel Aviv, Israel \\
Email: elih@eng.tau.ac.il
}
\and
\authorblockN{Yuval Kochman}
\authorblockA{School of CSE,
HUJI\\
Jerusalem, Israel \\
Email: yuvalko@cs.huji.ac.il
}
\and
\authorblockN{Uri Erez}
\authorblockA{Dept. of EE-Systems,
TAU\\
Tel Aviv, Israel \\
Email: uri@eng.tau.ac.il
}
}

\maketitle


\begin{abstract}
We consider upper bounds on the error probability in channel coding. We derive an improved maximum-likelihood union bound, which takes into account events where the likelihood of the correct codeword is tied with that of some competitors. We compare this bound to various previous results, both qualitatively and quantitatively. With respect to maximal error probability of linear codes, we
observe that when the channel is additive, the derivation of bounds, as well as the assumptions on the admissible encoder and decoder, simplify considerably.
\end{abstract}

\section{Introduction}

Consider maximum-likelihood decoding, known to be optimal in the sense of average error probability between equi-probable messages. What happens when $\ell$ false codewords share the maximum likelihood score with the transmitted one? No matter how such a tie is broken, the average error probability given this event is $1-\nicefrac{1}{\ell+1}$. Computing the optimal error probability, taking into account all possible ties, is exponentially hard. Can we ignore this event, i.e., assume that in case of a tie the decoder is always right or always wrong? The answer depends upon both the channel and the blocklength. When the likelihood score is a continuous random variable, the probability of ties is zero. Also, for long enough blocks, the distribution of the score of a word can be closely approximated by a continuous one (e.g., using the central-limit theorem). However, for small enough alphabet size and short enough blocks, the effect of ties on error-probability bounds is not negligible.

We revisit the finite-blocklength achievability results of Polyanskiy et al. \cite{PolyanskiyPVFiniteLength10}. For i.i.d. codewords, and when we can neglect ties, computation of the exact average error probability is not harder than that of the random-coding union (RCU) achievability bound. However, ties cannot always be neglected. As the RCU bound assumes that ties always lead to errors, it can be improved; indeed, we derive a tighter bound. In particular, unlike the RCU bound, the new bound is always tighter than bounds based upon threshold decoding.

When it comes to maximal error probability, tie-breaking is no longer a mere issue of analysis. Rather, ties have to be broken in a manner that is ``fair'', such that the error probability given different messages is balanced. In \cite{PolyanskiyPVFiniteLength10}, a randomized decoder is employed in order to facilitate such fairness. But is randomization necessary? We show that at least for additive channels (over a finite field), a deterministic decoder suffices.

\section{Notation and Background}

We consider coding over a memoryless channel with some finite blocklength $n$, i.e.:
\begin{align} V(\yy|\xx) = \prod_{i=1}^n V(y_i|x_i). \end{align}
for every $\xx \in \mset{X}^n, \yy \in \mset{Y}^n$.
The channel input and output alphabets are arbitrary.
For the sake of simplicity, we adopt discrete notation; the bounds do not depend on alphabet sizes, and the educated reader can easily translate the results to the continuous case (which is of limited interest in the context of tie-breaking).
The codebook is given by $\xx_1,\ldots,\xx_M$, where $M$ is the number of codewords (that is, the coding rate is $R=\nicefrac{1}{n} \log M$).
The decoder produces an estimate $\hat m$, where the transmitted message index is denoted by $m$. The average error probability, assuming equiprobable messages, is given by:
\begin{align} \epsilon = \frac{1}{M} \sum_{m=1}^M \pr{\hat m \neq m | \XX=\xx_m}. \end{align}
The maximum error probability is given by
\begin{align} \label{eq:maximal} \overline \epsilon = \max_{m=1\ldots M} \pr{\hat m \neq m | \XX=\xx_m}. \end{align}

For the sake of analyzing the error probability, it is convenient to consider code ensembles. All ensembles we consider in this work fall in the following category.
\begin{definition}[Random conditionally-symmetric ensemble]
\label{def:RCSE}
An ensemble is called random conditionally-symmetric ensemble (RCSE) if its codewords are drawn such that for every different $m,j,k \in \{1,\ldots\,M\}$ and for every $\xx,\bar{\xx} \in \mathcal{X}^n$:
\begin{align}
\pr{ \XX_j=\bar{\xx} \mid \XX_m=\xx } = \pr{ \XX_k=\bar{\xx} \mid \XX_m=\xx }
\end{align}
\end{definition}
It is easy to verify, that for an RCSE, all words are identically distributed. We can thus define by $\XX$ a word drawn under the ensemble distribution (not necessarily memoryless) $P$ over the set $\mset{X}^n$.
Using this input distribution, the information density is given by:
\begin{align} \label{eq:information_density} i(\xx;\yy) = \log \frac {V(\yy|\xx)}{PV(\yy)}, \end{align}
where $PV(\yy)$ is the output distribution induced by $P(\xx)$ and $V(y|x)$. We denote by $\YY$ the output corresponding to the random input $\XX$, and the random variable $i(\XX;\YY)$ is defined accordingly. In addition, we define $i(\bar\XX;\YY)$ as the information density a codeword $\bar\XX$ other\footnote{In a random codebook it may happen that the codebook contains some identical codewords. Thus it is possible that $\bar{\XX}=\XX$, as long as they represent different messages.} than the one that generated $\YY$.\footnote{In \cite{PolyanskiyPVFiniteLength10}, the notation $i(\XX;\bar\YY)$ is sometimes used; for RCSE, the two are equivalent.}

The importance of deriving bounds for an RCSE is due to the fact that this class includes many interesting ensembles. An important special case of RCSE is the pairwise-independent ensemble:
\begin{definition}[Pairwise-independent ensemble]
A pairwise independent ensemble (PIE) is an ensemble such that its codewords are pairwise-independent and identically distributed. That is, for any two indices $i\neq j$, \begin{align} \pr{\XX_i=\xx_i|\XX_j=\xx_j} = \pr{\XX_i=\xx_i} = P(\xx). \end{align}
\end{definition}

We note that the codewords of an RCSE are not necessarily pairwise-independent. One example is linear random codes with a cyclic generating matrix~\cite{Seguin79}. In this ensemble, a codebook is a linear code, such that all the cyclic shifts of the any codeword are also codewords.
Generally, RCSE (which are not necessarily PIE) can be constructed by first drawing a class of codewords, and then, randomly (uniformly) drawing the codewords from this class.
Alternatively, it can be constructed by choosing some codeword which defines the class, from which all the other codewords will be drawn.

Finally, the following class of channels turns out to play a special role.
\begin{definition}[Additive channels] \label{definition:additive} A channel is additive over a finite group $\mset{G}$ with an operation, if $\mset{X}=\mset{Y}=\mset{G}$, and the transition distribution $V(y|x)$ is compatible with
\[ Y = X + N \]
where $N$ is statistically independent of $X$, and ``$+$'' denotes the operation over $\mset{G}$.\footnote{The operation ``$-$'' over the group, which is uniquely defined by the operation ``$+$'', such that for any $a,b,c \in \mset{G}: \mspcc a-b=c \mspcd \textrm{iff} \mspcd a=c+b$.}
\end{definition}
For example, for modulo-additive channels the alphabet is the ring $\mZZ_q$, and addition is modulo $q$. The importance of additive channels stems from the following.
\begin{lemma}
\label{lemma:additive}
Consider an additive channel over $\mset{G}$, and a codebook drawn from a PIE with uniform input distribution over $\mset{G}^n$, i.e. $P(\xx)=\cardinality{\mset{G}}^{-n} \mspcd \forall \xx\in\mset{G}^n$.
Then, $i(\bar\XX;\YY)$ is statistically independent of $(\XX,\YY)$. \end{lemma}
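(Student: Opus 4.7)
The plan is to show that $i(\bar\XX;\YY)$ depends on $(\XX,\YY)$ only through the combination $\YY - \bar\XX$, and then to check that this combination is itself independent of $(\XX,\YY)$.

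First, I would compute the induced output distribution. Writing the additive channel as $V(\yy|\xx) = P_\NN(\yy - \xx)$ for the noise law $P_\NN$ (and viewing subtraction componentwise over $\mset{G}$), uniformity of $P$ gives
\begin{align}
PV(\yy) = \cardinality{\mset{G}}^{-n} \sum_{\xx \in \mset{G}^n} P_\NN(\yy - \xx) = \cardinality{\mset{G}}^{-n},
\end{align}
since the inner sum just reindexes $P_\NN$ over the whole group. Substituting into \eqref{eq:information_density} yields
\begin{align}
i(\bar\xx;\yy) = \log \left( \cardinality{\mset{G}}^n \mspca P_\NN(\yy - \bar\xx) \right),
\end{align}
so $i(\bar\XX;\YY)$ is a deterministic function of the single random variable $\YY - \bar\XX$.

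Second, I would show that $\YY - \bar\XX$ is uniform on $\mset{G}^n$ and independent of $(\XX,\YY)$. Since $\YY = \XX + \NN$, conditioning on $(\XX,\YY)=(\xx,\yy)$ is the same as conditioning on $(\XX,\NN)=(\xx,\yy-\xx)$. The channel noise $\NN$ is, by definition of an additive channel, independent of the channel input; under the standard convention that the noise is also independent of the codebook, $\NN$ is independent of $\bar\XX$ as well. Combined with the PIE property, which gives that $\bar\XX$ is uniform over $\mset{G}^n$ and independent of $\XX$, one obtains that $\bar\XX$ is conditionally uniform given $(\XX,\NN)$. Therefore, given $(\XX,\YY)=(\xx,\yy)$, the difference $\yy - \bar\XX$ is uniform on $\mset{G}^n$, with a distribution that does not depend on $(\xx,\yy)$.

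Combining the two steps, $i(\bar\XX;\YY)$ is a function of a random variable that is independent of $(\XX,\YY)$, and therefore it is itself independent of $(\XX,\YY)$. The only real care needed is bookkeeping: keeping the roles of $\XX$ (the true transmitted codeword generating $\YY$), $\bar\XX$ (a distinct codeword), and $\NN$ (independent of both) cleanly separated, and using PIE in the correct direction — namely, that conditioning on $\XX$ leaves $\bar\XX$ uniform — which is where the pairwise-independence and uniform-input hypotheses are both essential.
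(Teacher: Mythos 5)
Your proof is correct and follows essentially the same route as the paper's: reduce $i(\bar\XX;\YY)$ to a function of $\YY-\bar\XX$ using the uniformity of the output distribution, and then use that $\bar\XX$ is uniform and independent of $(\XX,\NN)$ to conclude that $\YY-\bar\XX$ is independent of $(\XX,\YY)$. If anything, your conditional computation of the law of $\YY-\bar\XX$ given $(\XX,\YY)$ spells out the independence step slightly more carefully than the paper's marginal calculation, but the substance is identical.
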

\begin{proof}
For this channel the information density~\eqref{eq:information_density} is equal to
\begin{align}
\label{eq:infromation_density_of_additive_channel}
i(\xx;\yy) = \log\frac{P_\NN(\yy-\xx)}{P_\YY(\yy)},
\end{align}
where $P_\NN(\cdot)$ is the distribution of the noise, and $P_\YY(\cdot)$ is the distribution of the channel output.
For this channel with codebook drawn from a PIE with a uniform distribution over $\mset{G}^n$, we have that for every $\zz \in \mset{G}^n$:
\begin{subequations}
\begin{align}
\pr{\YY-\bar\XX = \zz} &= \pr{\XX+\NN-\bar\XX\ = \zz}
\\&= \cardinality{\mset{G}}^{-n},
\end{align}
\end{subequations}
since $\bar\XX$ is uniformly distributed over $\mset{G}^n$ and statistically independent of $(\XX,\NN)$. Therefore, $\YY-\bar{\XX}$ is statistically independent of $(\XX,\YY)$; Moreover, any function of $\YY-\bar{\XX}$ is also statistically independent of $(\XX,\YY)$, in particular $P_N(\YY-\bar{\XX})$ is statistically independent of $(\XX,\YY)$.

Since $\XX$ is uniformly distributed over $\mset{G}^n$, and is statistically independent noise, then the channel output $\YY$ is also uniformly distributed over $\mset{G}^n$, i.e. for any $\yy \in \mset{G}^n$:
\begin{align}
P_\YY(\yy) &= \cardinality{\mset{G}}^{-n},
\end{align}
and hence, $P_\YY(\YY)$ is statistically independent of $(\XX,\YY)$.
From the two observations above, we conclude that $i(\bar\XX;\YY)$ is statistically independent with $(\XX,\YY)$.
\end{proof}

%
\section{I.I.D. Codebooks}
%

Before stating the main results that apply to any RCSE, we start by simple bounds that hold for the special case of an i.i.d. ensemble. That is, all codewords are mutually independent, and each one distributed according to $P(\XX)$. In this case, the average error probability is well known, although hard to compute \cite{PolyanskiyPVFiniteLength10}. Denote:
\begin{subequations} \label{eq:WZ}
\begin{align}
W & = \pr{ i(\bar{\XX};\YY) = i(\XX;\YY) | \XX, \YY } \\
Z & = \pr{ i(\bar{\XX};\YY) < i(\XX;\YY) | \XX, \YY }.
\end{align}
\end{subequations}
Then, for an i.i.d. ensemble \cite[Thm. 15]{PolyanskiyPVFiniteLength10}:
\begin{align} \label{PPV_exact}
\epsilon^{\textrm{(iid)}} = 1 - \sum_{\ell=0}^{M-1} \frac{1}{\ell+1} \cdot \binom{M-1}{\ell} \EE_{\XX,\YY} \left( W^\ell Z^{M-1-\ell} \right) . \end{align}
This result stems from the fact that for equiprobable words, maximum likelihood (ML) decoding is just maximum information density. We note that $\ell$ represents the number of competing codewords that share the maximal information-density score with the correct one; given $\ell$, the correct codeword will be chosen with probability $\nicefrac{1}{\ell+1}$. If $W=0$ (as happens when $V(Y|x)$ is a proper density for every $x \in \mset{X}$), the calculation is straightforward. Otherwise, it has exponential complexity. Thus, the main burden is with dealing with ties. In order to avoid such burden, we suggest the following simple bounds.

\begin{proposition}[Bounds for i.i.d. codebooks]
\label{prop:iid_bound}
For an i.i.d. ensemble,
\begin{align}
1 - \EE_{\XX,\YY} \left[ (W+Z)^{M-1} \right] \leq \epsilon^{\textrm{(iid)}} \leq 1 - \EE_{\XX,\YY} \left[ Z^{M-1} \right].
\end{align}
\end{proposition}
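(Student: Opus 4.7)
The plan is to start directly from the exact expression~\eqref{PPV_exact} and sandwich the sum
\[
S \;\mDefine\; \sum_{\ell=0}^{M-1} \frac{1}{\ell+1}\binom{M-1}{\ell}\,\EE_{\XX,\YY}\!\left(W^\ell Z^{M-1-\ell}\right),
\]
so that $\epsilon^{\textrm{(iid)}} = 1 - S$. Since $W$ and $Z$ are probabilities, $W,Z\geq 0$, and every term in $S$ is non-negative, which is the only tool required for both bounds.

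For the upper bound on $\epsilon^{\textrm{(iid)}}$, I would lower bound $S$ by retaining only the $\ell=0$ term:
\[
S \;\geq\; \tfrac{1}{0+1}\binom{M-1}{0}\,\EE_{\XX,\YY}\!\left(W^0 Z^{M-1}\right) \;=\; \EE_{\XX,\YY}\!\left[Z^{M-1}\right],
\]
which rearranges to $\epsilon^{\textrm{(iid)}} \leq 1 - \EE_{\XX,\YY}[Z^{M-1}]$. Operationally this corresponds to pessimistically assuming that every tie leads to an error (the $\ell\geq 1$ terms represent the probability of correctly resolving ties); discarding this ``bonus'' from tie-breaking yields the threshold-style bound.

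For the lower bound on $\epsilon^{\textrm{(iid)}}$, I would upper bound $S$ by replacing the factor $\tfrac{1}{\ell+1}$ by $1$ in every term, using that $\tfrac{1}{\ell+1}\leq 1$ for all $\ell\geq 0$ and that each summand is non-negative. By the binomial theorem,
\[
S \;\leq\; \sum_{\ell=0}^{M-1}\binom{M-1}{\ell}\,\EE_{\XX,\YY}\!\left(W^\ell Z^{M-1-\ell}\right)
\;=\; \EE_{\XX,\YY}\!\left[(W+Z)^{M-1}\right],
\]
where the last step uses linearity of expectation after the pointwise binomial identity $\sum_\ell \binom{M-1}{\ell} W^\ell Z^{M-1-\ell} = (W+Z)^{M-1}$. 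This rearranges to $\epsilon^{\textrm{(iid)}} \geq 1 - \EE_{\XX,\YY}[(W+Z)^{M-1}]$. Operationally this corresponds to optimistically assuming every tie is broken in favour of the transmitted codeword.

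There is no substantive obstacle: the argument is purely algebraic given~\eqref{PPV_exact}, and the only ``choice'' is which crude bound on $\tfrac{1}{\ell+1}$ to use on each side (keep only $\ell=0$ versus replace by $1$ for all $\ell$). The non-triviality lies entirely in the prior result~\eqref{PPV_exact}; what this proposition isolates is the fact that, from that exact formula, the optimistic and pessimistic tie-breaking assumptions yield closed-form expressions that neatly bracket the true average error probability.
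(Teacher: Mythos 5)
Your proof is correct and follows the route the paper itself indicates: the paper remarks that the proposition ``can be shown either from \eqref{PPV_exact} or directly'' and notes that the upper bound is just the $\ell=0$ term of \eqref{PPV_exact}, which is exactly your argument, with the $\nicefrac{1}{\ell+1}\le 1$ relaxation and the binomial identity supplying the lower bound. Your operational readings (ties always lost versus ties always won) also match the paper's stated interpretation, so there is nothing to add.
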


This result can be shown either from \eqref{PPV_exact} or directly. For the lower bound, in case multiple codewords (including the correct one) attain the maximal information density, the correct one is always chosen; for the upper bound, it is never chosen under such circumstances. Of course, as the upper bound is just the first term in \eqref{PPV_exact}, one may tighten it by taking more terms. The difference between the lower and upper bounds may be quite significant, as demonstrated in Figure~\ref{fig:iid_bounds}.

\begin{figure}
\centering
\includegraphics[width=\columnwidth]{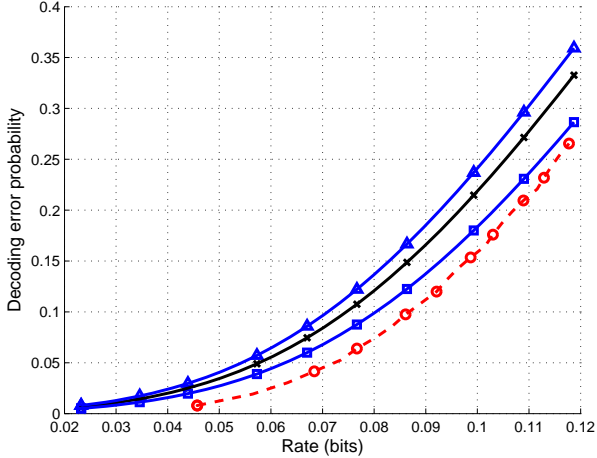} 
\caption{The effect of tie-breaking on the performance of i.i.d. codebooks. We demonstrate the effect using a BSC with crossover probability $0.3$, at blocklength $n=100$. The triangle- and square- marked solid curves give the lower and upper bounds of Proposition~\ref{prop:iid_bound}, respectively. The $\times$-marked solid curve is the exact error probability of the i.i.d. ensemble \eqref{PPV_exact}, evaluated by taking enough terms in the sum, such that the effect of additional ones is numerically insignificant. For reference, the circle-marked dashed curve gives the tightest lower bound on the error probability, which holds for \emph{any} codebook~\cite[Theorem 16]{PolyanskiyPVFiniteLength10}.}
\label{fig:iid_bounds}
\end{figure}

%
\section{Bounds for RCSE}
%
\label{sec:RCU}

\subsection{Maximum-Likelihood Union Bounds}

When the codewords are not i.i.d., we cannot use anymore products of probabilities as done in the previous section. However, for providing a lower bound on the error probability, we can use a union bound. We derive a result that is close in spirit to the RCU bound \cite[Theorem 16]{PolyanskiyPVFiniteLength10}, which states that $\epsilon^{\textrm{(iid)}} \leq \epsilon_\textrm{RCU}$,\footnote{Indeed, it is noted in \cite[Appendix A]{PolyanskiyPVFiniteLength10} that pairwise independence is sufficient.}  where
\begin{align}
\label{eq:RCU}
\epsilon_\textrm{RCU} \triangleq \EE_{\XX,\YY} \left[ \min \left\{
1,
(M-1) \cdot \left(1-Z\right) \right\} \right].
\end{align}
We improve this bound in two ways: First, it is extended to any RCSE, and second, the case of equal maximal information-density scores is taken into account.

\begin{theorem}[RCU$^*$ bound]
The average error probability of an RCSE satisfies $\epsilon^{\textrm{(RCSE)}} \leq \epsilon_{\textrm{RCU}^*}$ where
\begin{align}
\label{eq:RCU_star}
 \epsilon_{\textrm{RCU}^*} \triangleq \EE_{\XX,\YY} \left[ \min \left\{
1,
(M-1) \cdot \left(1-Z-\frac{W}{2}\right) \right\} \right],
\end{align}
where the conditional probabilities $W$ and $Z$ are given by \eqref{eq:WZ}.
\end{theorem}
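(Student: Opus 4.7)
By the RCSE symmetry, all codewords are identically distributed, so $\pr{\hat m \neq m \mid \XX = \xx_m}$ does not depend on $m$, and I would analyze the case $m=1$, writing $\XX:=\XX_1$. Conditioning on $(\XX,\YY)$, the ML decoder with uniform tie-breaking outputs $\hat m$ uniformly at random from the set of indices achieving the maximum of $i(\cdot;\YY)$. Since the decoder picks exactly one index, I would decompose the error event as the \emph{disjoint} union $\{\hat m \neq 1\} = \bigcup_{j=2}^M A_j$ with $A_j = \{\hat m = j\}$, which gives the exact identity
\begin{align}
\pr{\hat m \neq 1 \mid \XX, \YY} = \sum_{j=2}^M \pr{A_j \mid \XX, \YY}.
\end{align}

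For each $j$ I would bound $\pr{A_j \mid \XX, \YY}$ by splitting on the sign of $i(\XX_j;\YY) - i(\XX;\YY)$. The strict-loss case $i(\XX_j;\YY) < i(\XX;\YY)$ contributes $0$, since $\XX_j$ is then never in the top set. The strict-win case $i(\XX_j;\YY) > i(\XX;\YY)$ contributes at most its own probability, yielding a term $\pr{i(\XX_j;\YY) > i(\XX;\YY) \mid \XX, \YY}$. The tied case is the crux: whenever $i(\XX_j;\YY) = i(\XX;\YY)$, either some other competitor strictly beats $\XX$ (in which case $\XX_j$ is not in the top set and $A_j$ has probability zero), or no such competitor exists (in which case the top set contains both $\XX$ and $\XX_j$, hence has cardinality at least two, so the uniform tie-break assigns $\hat m = j$ with probability at most $1/2$). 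Thus the tied contribution is at most $\tfrac{1}{2}\pr{i(\XX_j;\YY) = i(\XX;\YY) \mid \XX, \YY}$.

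Next I would invoke the RCSE property: for every $j \neq 1$, the conditional law of $\XX_j$ given $\XX_1$ equals the law of the generic $\bar\XX$ used in \eqref{eq:WZ}. Therefore, for each $j \in \{2,\ldots,M\}$, $\pr{i(\XX_j;\YY) = i(\XX;\YY) \mid \XX, \YY} = W$ and $\pr{i(\XX_j;\YY) > i(\XX;\YY) \mid \XX, \YY} = 1 - Z - W$. Summing the per-$j$ bound gives
\begin{align}
\pr{\hat m \neq 1 \mid \XX, \YY} \leq (M-1)\left(1 - Z - W + \tfrac{W}{2}\right) = (M-1)\left(1 - Z - \tfrac{W}{2}\right),
\end{align}
which, together with the trivial bound $\pr{\hat m \neq 1 \mid \XX, \YY} \leq 1$ and the monotonicity of expectation, yields $\epsilon^{\textrm{(RCSE)}} \leq \epsilon_{\textrm{RCU}^*}$.

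The main obstacle is the $1/2$ factor in the tied case. Unlike in the classical RCU bound, where one only needs a pairwise union bound over strict losses, here the decoder's behavior at ties depends on the full top set (which in turn depends on all $M-2$ other competitors). The key point is that the $1/2$ estimate can be made \emph{pointwise} in those remaining competitors, so that the RCSE property (which controls only the pairwise conditional marginals) suffices and no extra joint-distribution assumption is required.
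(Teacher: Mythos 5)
Your proof is correct, and it takes a cleaner route than the paper's, so a comparison is worthwhile. The paper fixes $m=1$, defines the score-comparison events $A_j=\{i(\XX_j;\YY)>i(\XX;\YY)\}$ and $B_j=\{i(\XX_j;\YY)=i(\XX;\YY)\}$, writes the error as $A\cup[B\cap \bar C_1]$ with $C_1$ the tie-break lottery event, claims the aggregated tie term carries a factor $\tfrac{1}{2}$, and only then union-bounds $A$ and $B$ over $j$. Its intermediate inequality $\pr{B\cap \bar C_1\mid \cdot}\le \tfrac{1}{2}\pr{B\mid\cdot}$ is not valid in isolation: with $\ell\ge 2$ tied impostors the conditional probability of losing the lottery is $\ell/(\ell+1)>\tfrac{1}{2}$, and the step is only repaired by the subsequent union bound, which overcounts by exactly $\ell$ (this is essentially acknowledged in Remark~\ref{remark:pairwise}). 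You instead decompose the error \emph{exactly} over the decoder output, $\{\hat m\neq 1\}=\bigcup_{j\ge 2}\{\hat m=j\}$, which is disjoint and needs no union bound at that stage, and you extract the factor $\tfrac{1}{2}$ per competitor, pointwise in the remaining codewords: if $j$ ties the correct word and nothing strictly beats it, the argmax set has size at least two, so uniform tie-breaking selects $j$ with probability at most $\tfrac{1}{2}$. This makes it transparent that only the pairwise conditional law of $\XX_j$ given $(\XX,\YY)$ enters, so the RCSE hypothesis plugs in exactly where needed, and it sidesteps the paper's shaky intermediate step while yielding the same final expression $(M-1)(1-Z-\tfrac{W}{2})$, capped at $1$ before taking the expectation.

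Two minor points. Your opening assertion that $\pr{\hat m\neq m\mid \XX=\xx_m}$ is independent of $m$ is not actually implied by Definition~\ref{def:RCSE}, which constrains only pairwise conditionals; but it is also unnecessary, since your per-$j$ bound holds verbatim for every transmitted index $m$ with the same $W,Z$, and averaging over $m$ gives the theorem (the paper's ``without loss of generality'' has the same status). Also, identifying $\pr{i(\XX_j;\YY)\gtrless i(\XX;\YY)\mid \XX,\YY}$ with $1-Z-W$ and $W$ tacitly uses that $\YY$ is conditionally independent of $\XX_j$ given the transmitted codeword; this is standard and implicit in \eqref{eq:WZ}, but is worth a clause.
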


\begin{proof}
Without loss of generality, assume that the transmitted codeword index is $m=1$. The ML decoder will choose the codeword with maximal information density; in case of equality, it will uniformly draw a winner between the maximal ones. Let $C_j$ be the event that the codeword $j$ was chosen in such a lottery.
Denote the following events:
\begin{subequations}
\begin{align}
A_j &\mDefine \left\{ i(\XX_j;\YY) > i(\XX;\YY)  \right\}
\\B_j &\mDefine \left\{ i(\XX_j;\YY) = i(\XX;\YY)  \right\}.
\end{align}
\end{subequations}
Also denote $A \mDefine \bigcup_{j=2}^M A_j$ and $B \mDefine \bigcup_{j=2}^M B_j$.
Then, the error probability is given by:
\begin{subequations}
\begin{align}
\epsilon^{\textrm(RCSE)} &= \pr{ A \cup \left[ B \cap \bar C_1 \right] \mid m=1 }
\\&=\EE_{\XX_1,\YY} \pr { A \cup \left[ B \cap \bar C_1 \right] \mid m=1,\XX_1,\YY }.
\\&=\EE_{\XX_1,\YY} \min\left\{1, \pr{ A \cup \left[ B \cap \bar C_1 \right] \mid m=1,\XX_1,\YY } \right\}
\\&\mDefine \pr{ A \cup \left[ B \cap \bar C_1 \right] \mid m=1,\XX_1,\YY }. \label{eq:RCU_star:proof:sub_in}
\end{align}
\end{subequations}
Using the union bound between events of equality and inequality, we have:
\begin{align}
S \leq \pr{A \mid m=1,\XX_1,\YY } + \pr{B \cap \bar C_1 \mid m=1,\XX_1,\YY  }.
\end{align}
Now, the event $C_1$ depends on the the rest of the variables only through the number of codewords that achieve equal score. Specifically, if there are $\ell$ impostors, then $\pr{C_1} = \nicefrac{1}{\ell+1}$. Since the second term is non-zero only if $\ell \geq 1$, it follows that:
\begin{align}
S \leq \pr{A \mid m=1,\XX_1,\YY  } + \frac{1}{2} \pr{B \mid m=1,\XX_1,\YY }.
\end{align}
We now use the union bound, as in \cite{PolyanskiyPVFiniteLength10}:
\begin{align}
S &\leq \sum_{j=2}^M \pr{ A_j \mid m=1,\XX_1,\YY  } + \frac{1}{2} \sum_{j=2}^M \pr{  B_j \mid m=1,\XX_1,\YY }.
\end{align}
Noting that for an RCSE each element in the left (resp. right) sum equals $1-W-Z$ (resp. $W$), and substituting this bound in~\eqref{eq:RCU_star:proof:sub_in} we arrive at the desired result.
\end{proof}

\begin{remark} \label{remark:pairwise} We can give the RCU$^*$ bound the following interpretation. First, each potential input $\xx_j$ is given an information-density score (equivalent to a likelihood score) $i_j$. Then, these scores are fed to a comparison process. The process is biased against the correct codeword, in the sense that it has to beat each and every impostor. However, each pairwise comparison itself is optimal (the correct codeword will beat an impostor with lower score), and fair (in case of a tie, both codewords are equally likely to win). This comparison mechanism is worse than the actual decoder used in the proof, since in case the correct codeword shares the maximal score with $\ell$ impostors, it has probability $2^{-\ell}$ to be chosen, rather than $\nicefrac{1}{\ell+1}$; yet, the union bound for both is equal. \end{remark}

\subsection{Relation to Gallager's Type-I bound}
\label{sec:Gallager_type_I}

The following bound is due to Gallager.
\begin{proposition}[Gallager type-I bound~\cite{Gallager63PhDLDPC}, Sec. 3.3]
For any constant $t$:
\begin{align}
\epsilon^{\textrm(RCSE)} \leq \epsilon_\textrm{G-I},
\end{align}
where
\begin{align}
\label{eq:GallagerTypeI}
\epsilon_\textrm{G-I}
&\mDefine \pr{ i(\XX;\YY) < t } + \nonumber
\\&\phantom{\leq} + (M-1)\pr{ i(\XX;\YY) \geq t \land i(\bar{\XX};\YY) \geq i(\XX;\YY) }.
\end{align}
\end{proposition}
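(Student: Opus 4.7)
The plan is to split the error event according to whether the information density of the transmitted codeword exceeds the threshold $t$, bound the low-score part trivially, and apply a union bound over competing codewords to the high-score part. The RCSE symmetry will then collapse the resulting $M-1$ pairwise-comparison probabilities into a single expression multiplied by $M-1$, yielding exactly the two terms of $\epsilon_\textrm{G-I}$.

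First, I would fix the transmitted index to be $m=1$ without loss of generality, and denote by $\mset{E}$ the ML error event. Regardless of how ties are broken, I would observe the inclusion $\mset{E} \subseteq \bigcup_{j=2}^M \{ i(\XX_j;\YY) \geq i(\XX_1;\YY) \}$: every tie is generously counted as an error. This is the source of $\epsilon_\textrm{G-I}$'s looseness relative to the RCU$^*$ bound, and it is precisely what forces the non-strict ``$\geq$'' in \eqref{eq:GallagerTypeI}. Next, I would split on the threshold,
\begin{align*}
\pr{\mset{E} \mid m=1}
&\leq \pr{i(\XX_1;\YY) < t \mid m=1} \\
&\phantom{\leq{}} + \pr{\mset{E},\, i(\XX_1;\YY) \geq t \mid m=1},
\end{align*}
and use the fact that codewords in an RCSE are identically distributed to identify the first summand with $\pr{i(\XX;\YY) < t}$, which is already the first term of $\epsilon_\textrm{G-I}$.

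For the second summand, I would apply the union bound to the inclusion above (intersected with $\{i(\XX_1;\YY) \geq t\}$), obtaining $\sum_{j=2}^M \pr{i(\XX_j;\YY) \geq i(\XX_1;\YY) \geq t \mid m=1}$. The final ingredient is Definition~\ref{def:RCSE}: it guarantees that the joint law of $(\XX_1,\XX_j,\YY)$ does not depend on the choice of $j \neq 1$, and coincides with that of $(\XX,\bar\XX,\YY)$. Each of the $M-1$ summands therefore equals $\pr{i(\bar\XX;\YY) \geq i(\XX;\YY) \land i(\XX;\YY) \geq t}$, and the sum matches the second term of \eqref{eq:GallagerTypeI} exactly. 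The argument presents essentially no obstacle beyond confirming that the tie-breaking slack is harmlessly absorbed by the non-strict inequality in the inclusion step; no independence beyond the pairwise symmetry granted by the RCSE property is required, and no calculation beyond the union bound is invoked.
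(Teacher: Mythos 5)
Your proof is correct, and it is exactly the argument the paper has in mind: the paper does not prove this proposition itself (it cites Gallager, Sec.~3.3), but its surrounding discussion describes the bound precisely as a union bound for the ML decoder with a global threshold $t$, which is what you carry out. Your only added ingredient --- using the RCSE symmetry of Definition~\ref{def:RCSE} to make the $M-1$ pairwise terms identical and equal to $\pr{ i(\XX;\YY) \geq t \land i(\bar{\XX};\YY) \geq i(\XX;\YY) }$ --- is the right way to justify the statement at the generality claimed here, and your handling of ties (counting them as errors, matching the non-strict inequality) is sound.
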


Just like the RCU, this bound is based upon a union bound for the ML decoder. However, it is inferior to the RCU bound, due to the following consideration. Taking the minimum between the union and one in the RCU bound is similar to the threshold $t$ in~\eqref{eq:GallagerTypeI}, in the way that it avoids over-estimating the error probability in cases where the channel behavior was ``bad''. However, the RCU bound uses the optimal threshold given $\XX$ and $\YY$; the Gallager bound uses a \emph{global} threshold, which reflects a tradeoff. Nevertheless, for additive channels (recall Definition~\ref{definition:additive}) the local and global thresholds coincide.


\begin{proposition}For any RCSE:
\label{prop:GI_geq_RCU}
\begin{align}
\epsilon_\textrm{G-I} \geq \epsilon_\textrm{RCU},
\end{align}
where $\epsilon_\textrm{G-I}$ and $\epsilon_\textrm{RCU}$ are defined in~\eqref{eq:GallagerTypeI} and in~\eqref{eq:RCU} respectively.
If the channel is additive and the code ensemble is PIE with uniform distribution over $\mset{X}$, then equality holds.
\end{proposition}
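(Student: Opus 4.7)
The plan is to prove the inequality pointwise in $(\XX,\YY)$ by comparing the integrands of $\epsilon_\textrm{G-I}$ and $\epsilon_\textrm{RCU}$, and to establish the equality case by invoking Lemma~\ref{lemma:additive} to reduce $Z$ to a function of $i(\XX;\YY)$ alone, after which a suitable choice of the threshold $t$ makes the two integrands coincide pointwise.

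First I would rewrite the Gallager bound as a single expectation. Since by definition $\pr{i(\bar\XX;\YY)\geq i(\XX;\YY)\mid\XX,\YY}=1-Z$, we have
\begin{align*}
\epsilon_\textrm{G-I} = \EE_{\XX,\YY}\!\left[\indicator{i(\XX;\YY)<t}+(M-1)\indicator{i(\XX;\YY)\geq t}(1-Z)\right].
\end{align*}
I would then bound this integrand pointwise by $\min\{1,(M-1)(1-Z)\}$: on $\{i(\XX;\YY)<t\}$ the G-I integrand equals $1$, which dominates $\min\{1,(M-1)(1-Z)\}$; on $\{i(\XX;\YY)\geq t\}$ it equals $(M-1)(1-Z)$, which also dominates $\min\{1,(M-1)(1-Z)\}$. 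Taking expectations gives $\epsilon_\textrm{G-I}\geq\epsilon_\textrm{RCU}$ for every $t$.

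For the equality case, I would invoke Lemma~\ref{lemma:additive}: under additivity together with a uniform PIE, $i(\bar\XX;\YY)$ is statistically independent of $(\XX,\YY)$. Consequently $Z=F(i(\XX;\YY))$, where $F(s)\triangleq\pr{i(\bar\XX;\YY)<s}$ is non-decreasing, so $(M-1)(1-Z)$ is a non-increasing function of the single variable $i(\XX;\YY)$. I would then pick $t^\star$ to be the level at which the map $s\mapsto(M-1)(1-F(s))$ crosses the value $1$. With this choice, the event $\{i(\XX;\YY)<t^\star\}$ coincides with $\{(M-1)(1-Z)\geq 1\}$, so on each branch the G-I integrand equals $\min\{1,(M-1)(1-Z)\}$, and the two expectations agree.

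The only delicate point, rather than a serious obstacle, is the handling of atoms of $i(\bar\XX;\YY)$ in the discrete case: if the information density hits $t^\star$ with positive probability, $t^\star$ has to be defined carefully using the left-continuous version of $F$ so that the indicator functions in the two integrands agree on the boundary event $\{i(\XX;\YY)=t^\star\}$. Aside from this bookkeeping, the proof reduces to the two-case pointwise comparison above.
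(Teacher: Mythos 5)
Your proof is correct and follows essentially the same route as the paper: your pointwise two-case comparison of the integrands is just the paper's conditioning on the threshold event $T=\{i(\XX;\YY)\geq t\}$ and its complement, and your equality argument is the paper's, invoking Lemma~\ref{lemma:additive} so that $(M-1)(1-Z)$ becomes a monotone function of $i(\XX;\YY)$ and choosing the threshold $t^\star$ where it crosses $1$. If anything, you are slightly more careful than the paper, since you verify saturation of the minimum on \emph{both} branches and flag the atom-at-the-threshold issue, whereas the paper only argues the branch $\{i(\XX;\YY)<t^*\}$ explicitly.
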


\begin{proof}
For the first part, define the events
$A \mDefine \left\{i(\bar{\XX};\YY) \geq i(\XX;\YY) \right\}$ and $T \mDefine \left\{ i(\XX;\YY) \geq t \right\}$ ($T^c$ denotes the complementary event). Then:
\begin{subequations}
\begin{align}
\epsilon_\textrm{RCU} &= \EE_{\XX,\YY} \left[ \min \left\{1,(M-1)\cdot(1-Z) \right\} \right]
\\&= \pr{T^c} \cdot \EE_{\XX,\YY} \left[ \min \left\{1,(M-1)\cdot(1-Z) \right\} \mid T^c \right] \nonumber
\\&\phantom{=} +\pr{T} \cdot \EE_{\XX,\YY} \left[ \min \left\{1,(M-1)\cdot(1-Z) \right\} \mid T \right]
\\&\leq \pr{T^c} +\pr{T} \cdot \EE_{\XX,\YY} \left[ (M-1)\pr{A \mid \XX,\YY} \mid T \right]
\label{eq:RCU_leq_GTI:proof:inequality}
\\&= \pr{T^c} + (M-1)\pr{T} \cdot \pr{A \mid T}
\\&= \epsilon_\textrm{G-I}
\end{align}
\end{subequations}

For the second part, recall that by Lemma~\ref{lemma:additive}, $ i(\bar\XX;\YY)$ is statistically independent of $(\XX,\YY)$.
Denote by $t^*$ the minimal threshold $t$ such that
\begin{align*}
(M-1) \pr{ i(\bar\XX;\YY) \geq t } \leq 1.
\end{align*}
Then
$
(M-1) \pr{ i(\bar\XX;\YY) \geq i(\XX;\YY) \mid i(\XX;\YY)<t^* } \geq 1.
$ 
Under the notation of $U$ from the first part the proof, we have that:
$
\EE_{\XX,\YY} \left[ \min \left\{1,U \right\} \mid i(\XX;\YY)<t^* \right] = 1,
$ 
i.e., the inequality in~\eqref{eq:RCU_leq_GTI:proof:inequality} is equality in this case.
\end{proof}

\begin{remark} It follows, that for the BSC, $\epsilon_\textrm{G-I} = \epsilon_\textrm{RCU}$. Indeed, it is noted in~\cite{PolyanskiyPVFiniteLength10} that for the BSC, the RCU bound is equal to Poltyrev's bound~\cite{Poltyrev:94:BSC}; this is not surprising, since the latter is derived from~\eqref{eq:GallagerTypeI} (Poltyrev's bound uses linear codes, see Section~\ref{sec:linear} in the sequel). \end{remark}

\begin{remark} Gallager's type I bound can be improved by breaking ties, similar to the improvement of RCU$^*$, leading to G-I$^*$. An analysis result to Proposition~\ref{prop:GI_geq_RCU} relates G-I$^*$ and RCU$^*$.\end{remark}

\subsection{Threshold-Decoding Union Bounds}

The average error probability of an RCSE can be further lower-bounded using the sub-optimal  \emph{threshold decoder}~\cite{Feinstein54}. This decoder looks for a codeword that has a likelihood score above some predetermined threshold. In \cite[Theorem 18]{PolyanskiyPVFiniteLength10} a union bound is derived for such a decoder, where if multiple codewords pass the threshold, the winner is chosen uniformly from among them.\footnote{In fact, the proof states that the ``first'' codeword to pass the threshold is selected. However, such ordering of the codewords is not required.} The resulting ``dependence testing'' (DT) bound is given by:
\begin{subequations}
\label{eq:DT}
\begin{align} \epsilon_\textrm{DT} \triangleq \pr{i(\XX;\YY) \leq \gamma} + \frac{M-1}{2} \pr{i(\bar \XX,\YY) > \gamma}, \end{align} where the optimal threshold is given by\footnote{In~\cite{MartinezFabregas2011:ISIT}, the threshold is further optimized, depending on the competing codeword and on the received word}
\begin{align} \label{eq:gamma} \gamma = \log \frac{M-1}{2}. \end{align} \end{subequations}

A troubling behavior, demonstrated in \cite{PolyanskiyPVFiniteLength10} using the binary erasure channel (BEC), is that sometimes $\epsilon_\textrm{RCU}>\epsilon_\textrm{DT}$. This is counter-intuitive since the DT bound is derived from a sub-optimal decoder. We find that this artifact stems from the fact that the RCU bound ignores ties, and prove that the improved bound, denoted by RCU$^*$, always satisfies $\epsilon_{\textrm{RCU}^*}\leq\epsilon_\textrm{DT}$. To that end, we prove a (very slightly) improved bound for the threshold decoder, that is closer in form to the ML bounds~\eqref{eq:RCU} and~\eqref{eq:GallagerTypeI}. It uses the following definitions (cf. \eqref{eq:WZ}).
\begin{subequations} \label{eq:WZ_q}
\begin{align}
W_q & = \pr{ q(i(\bar{\XX};\YY)) = q(i(\XX;\YY)) | \XX, \YY } \\
Z_q & = \pr{ q(i(\bar{\XX};\YY)) < q(i(\XX;\YY)) | \XX, \YY },
\end{align}
where $q(i)$ is the indicator function:
\begin{align} \label{eq:q}
q(i) = \indicator{i > \gamma}.
\end{align}
\end{subequations}

\begin{proposition}
For an RCSE,
\begin{align}
\epsilon^{\textrm{(RCSE)}} \leq \epsilon_{\textrm{TU}},
\end{align}
where
\begin{align}
\label{eq:TU}
\epsilon_{\textrm{TU}} \triangleq \EE_{\XX,\YY} \left[ \min \left\{
1,
(M-1) \cdot \left(1-Z_q-\frac{W_q}{2}\right) \right\} \right].
\end{align}
Furthermore, $\epsilon_\textrm{TU} \leq \epsilon_\textrm{DT}$.
\end{proposition}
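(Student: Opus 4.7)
The plan is to establish the two claims separately.

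For the first inequality, $\epsilon^{\textrm{(RCSE)}} \leq \epsilon_{\textrm{TU}}$, I would essentially replay the proof of the RCU$^*$ bound with the continuous log-likelihood score $i(\cdot;\YY)$ replaced by its binary coarsening $q(i(\cdot;\YY))$. Since ML decoding is optimal under equiprobable messages, it suffices to upper-bound the error probability of the sub-optimal ``$q$-score'' decoder that picks a codeword uniformly at random from those achieving the maximum $q$-score (this coincides with the usual threshold decoder whenever $q(i(\XX;\YY))=1$). Conditioning on $(\XX,\YY)$ and taking the transmitted index to be $m=1$ without loss of generality, define for each impostor $j\ge 2$ the events $A_j \mDefine \{q(i(\XX_j;\YY)) > q(i(\XX;\YY))\}$ and $B_j \mDefine \{q(i(\XX_j;\YY)) = q(i(\XX;\YY))\}$. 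The same union-bound-with-ties argument as in the RCU$^*$ proof then yields
\begin{align*}
\pr{\hat m \neq 1 \mid \XX,\YY} \leq \sum_{j=2}^M \left[ \pr{A_j \mid \XX,\YY} + \frac{1}{2} \pr{B_j \mid \XX,\YY} \right],
\end{align*}
and by the RCSE property each summand equals $(1-W_q-Z_q) + \frac{1}{2} W_q = 1 - Z_q - W_q/2$. Clipping by $1$ and taking expectations delivers $\epsilon_{\textrm{TU}}$.

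For the second inequality, $\epsilon_{\textrm{TU}} \leq \epsilon_{\textrm{DT}}$, I would argue pointwise in $(\XX,\YY)$ by splitting on whether $i(\XX;\YY)>\gamma$. Write $p \mDefine \pr{i(\bar\XX;\YY)>\gamma \mid \XX,\YY}$. When $i(\XX;\YY)>\gamma$, we have $q(i(\XX;\YY))=1$, so $Z_q = 1-p$, $W_q = p$, and $1 - Z_q - W_q/2 = p/2$; thus the TU integrand $\min\{1,(M-1)p/2\}$ is bounded by the DT integrand, which in this case equals $(M-1)p/2$ (the indicator contributes zero). When $i(\XX;\YY)\leq\gamma$, we have $q(i(\XX;\YY))=0$, so $Z_q=0$, $W_q = 1-p$, and $1 - Z_q - W_q/2 = (1+p)/2$; the TU integrand is then $\min\{1,(M-1)(1+p)/2\}\leq 1$, whereas the DT integrand equals $1 + (M-1)p/2 \geq 1$. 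In either case the TU integrand is dominated by the DT integrand, and taking expectations over $(\XX,\YY)$ completes the argument.

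The main subtlety is conceptual rather than computational. In the first step one must interpret $\epsilon_{\textrm{TU}}$ as controlling the error of an \emph{explicit} sub-optimal decoder whose tie-breaking is fair in the sense of Remark~\ref{remark:pairwise}, so that the $\frac{1}{2}$-factor for equal-score events is legitimate; once this is granted, the RCSE structure makes the summation collapse exactly as in the RCU$^*$ proof. The second inequality then reduces to the clean two-case pointwise calculation above, in which the $\min\{1,\cdot\}$ clipping is precisely what rescues the bound in the sub-threshold regime $i(\XX;\YY)\leq\gamma$.
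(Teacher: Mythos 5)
Your proposal is correct and follows essentially the same route as the paper: the first inequality is obtained exactly as in the paper, by viewing $\epsilon_{\textrm{TU}}$ as the RCU$^*$-type union bound for the sub-optimal decoder that applies ML comparison (with fair tie-breaking) to the quantized scores $q(i(\cdot;\YY))$, and the second inequality is the same pointwise comparison of the TU and DT integrands in $(\XX,\YY)$. The only difference is presentational: the paper writes the second step as a short chain of inequalities inserting the $\min\{1,\cdot\}$ clipping, whereas you make the underlying case split on $i(\XX;\YY)\gtrless\gamma$ (with $Z_q,W_q$ expressed via $p$) explicit, which amounts to the same calculation.
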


\begin{proof}
For proving achievability, consider a decoder identical to the ML decoder, except that before comparing the words, the information-density scores are quantized according to \eqref{eq:q}. For the comparison to the DT bound,
\begin{subequations}
\begin{align}
\epsilon_\textrm{DT}
&= E_{\XX,\YY} \left[ \indicator{i(\XX;\YY) \leq \gamma} + \frac{M-1}{2} \pr{i(\bar \XX,\YY) > \gamma \mid \XX,\YY} \right]
\\& \geq E_{\XX,\YY} \left[ \min \left\{ 1, (M-1) \cdot \left[ \indicator{i(\XX;\YY) \leq \gamma} \phantom{\frac{1}{2}}\right.\right.\right. \nonumber
\\&\phantom{\geq} \left.\left.\left. + \frac{1}{2} \pr{i(\bar \XX,\YY) > \gamma \mid \XX,\YY} \right] \right\} \right]
\\& \geq E_{\XX,\YY} \left[ \min \left\{ 1, \frac{M-1}{2} \cdot \left[ \indicator{i(\XX;\YY) \leq \gamma}  \right.\right.\right. \nonumber
\\&\phantom{\geq}\left.\left.\phantom{\frac{1}{2}}\left. + \pr{i(\bar \XX,\YY) > \gamma \mid \XX,\YY} \right] \right\} \right]
\\&= \epsilon_\textrm{TU}.
\end{align}
\end{subequations}
\end{proof}

\begin{remark} It is not obvious that the optimal threshold for the TU bound is $\gamma$ of \eqref{eq:gamma}. However, it is good enough for our purposes. \end{remark}

\begin{proposition}
For any channel, $\epsilon_{\textrm{RCU}^*}\leq\epsilon_\textrm{TU}$. Thus, the RCU$^*$ bound is tighter than the DT bound, i.e.:
\begin{align*}
\epsilon_{\textrm{RCU}^*}\leq\epsilon_\textrm{DT}.
\end{align*}
\end{proposition}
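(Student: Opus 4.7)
The plan is to split the expectation defining each bound into the two regimes $\{i(\XX;\YY)\leq\gamma\}$ and $\{i(\XX;\YY)>\gamma\}$, with the threshold $\gamma=\log\tfrac{M-1}{2}$. Write $a=i(\XX;\YY)$, $b=i(\bar\XX;\YY)$, $U=(M-1)(1-Z-\tfrac{W}{2})$ and $U_q=(M-1)(1-Z_q-\tfrac{W_q}{2})$, so that $\epsilon_{\textrm{RCU}^*}=\EE[\min\{1,U\}]$ and $\epsilon_{\textrm{TU}}=\EE[\min\{1,U_q\}]$; the goal is to handle each regime separately and then recombine.

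On $\{a\leq\gamma\}$, unpacking the definition of $q$ gives $U_q=\tfrac{M-1}{2}\bigl[1+\pr{b>\gamma\mid\XX,\YY}\bigr]\geq\tfrac{M-1}{2}$, so $\min\{1,U_q\}=1$ (for $M\geq 3$), and this contribution to $\epsilon_{\textrm{TU}}$ trivially dominates the corresponding contribution to $\epsilon_{\textrm{RCU}^*}$. On $\{a>\gamma\}$, where $U_q=\tfrac{M-1}{2}\pr{b>\gamma\mid\XX,\YY}$, I would apply Markov's inequality to the non-negative random variable $e^{i(\bar\XX;\YY)}$ (whose conditional mean equals $1$ under the ensemble) to obtain $\pr{b>\gamma\mid\XX,\YY}\leq 2/(M-1)$, hence $U_q\leq 1$ and $\min\{1,U_q\}=U_q$. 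Since always $\min\{1,U\}\leq U$, the task reduces to showing the integrated inequality $\EE[U\cdot\indicator{a>\gamma}]\leq\EE[U_q\cdot\indicator{a>\gamma}]$.

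The crux is a change-of-measure / swap-symmetry lemma:
\[
\pr{b>a,\,a,b>\gamma}\leq\pr{a>b,\,a,b>\gamma}.
\]
I plan to prove it by writing the left-hand probability as a sum over $(\xx,\bar\xx,\yy)$, using the identity $V(\yy\mid\xx)=V(\yy\mid\bar\xx)\,e^{i(\xx;\yy)-i(\bar\xx;\yy)}$ to re-express $P(\xx)V(\yy\mid\xx)$ in terms of $V(\yy\mid\bar\xx)$, and then invoking the swap-symmetry of the joint distribution of any two codewords in an RCSE (immediate for PIE; in general it follows from Definition~\ref{def:RCSE} together with the equality of marginals). After relabelling $\xx\leftrightarrow\bar\xx$, the indicator becomes $\indicator{a>b,\,a,b>\gamma}$, on which the weight $e^{b-a}$ is strictly less than $1$, yielding the claimed inequality.

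To close, observe that both $\{b>a,a>\gamma\}$ and $\{b=a,a>\gamma\}$ lie inside $\{b>\gamma\}$, so
\[
\EE[U\indicator{a>\gamma}]=(M-1)\bigl[\pr{b>a,a,b>\gamma}+\tfrac{1}{2}\pr{b=a,a,b>\gamma}\bigr],
\]
and replacing $\pr{b>a,\cdots}$ by the half-sum $\tfrac12[\pr{b>a,\cdots}+\pr{a>b,\cdots}]$ via the lemma collapses the right-hand side to $\tfrac{M-1}{2}\pr{a,b>\gamma}=\EE[U_q\indicator{a>\gamma}]$. The main obstacle will be the swap-symmetry step: the pointwise inequality $U\leq U_q$ is genuinely false, so the proof must exchange the roles of the true and impostor codewords under a likelihood-ratio change of measure, rather than rely on any term-by-term bound.
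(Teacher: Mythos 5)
Your proof is correct in its essentials, but it takes a genuinely different --- and in fact more careful --- route than the paper's. The paper's argument is a one-line appeal to Remark~\ref{remark:pairwise}: since the RCU$^*$ decoder performs optimal fair pairwise ML comparisons while the TU decoder compares quantized scores, the paper asserts the \emph{pointwise} (given $\XX,\YY$) inequality $Z+\frac{W}{2}\geq Z_q+\frac{W_q}{2}$, i.e.\ $U\leq U_q$ in your notation, from which the claim would follow by monotonicity of $\min\{1,\cdot\}$. As you correctly observe, that pointwise inequality is false in general: condition on a pair $(\xx,\yy)$ with $a=i(\xx;\yy)\leq\gamma$ for which all the impostor mass of $b=i(\bar\XX;\yy)$ lies in $(a,\gamma]$; then $Z+\frac{W}{2}=0$ while $Z_q+\frac{W_q}{2}=\frac{1}{2}$. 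Pairwise ML optimality is a statement about the error probability averaged over \emph{both} hypotheses, so (together with exchangeability of $(\XX,\bar\XX)$) it only yields $\EE[U]\leq\EE[U_q]$, which does not commute with the clipping at $1$. Your decomposition --- saturation $\min\{1,U_q\}=1$ on $\{a\leq\gamma\}$, the Markov bound $U_q\leq 1$ on $\{a>\gamma\}$, and the likelihood-ratio swap $\pr{b>a,\,a,b>\gamma}\leq\pr{a>b,\,a,b>\gamma}$ obtained by rewriting $V(\yy|\xx)=V(\yy|\bar\xx)e^{a-b}$ and relabelling --- is exactly the averaging argument needed to close this gap, and the computations ($U_q=\frac{M-1}{2}[1+\pr{b>\gamma\mid\XX,\YY}]$ versus $\frac{M-1}{2}\pr{b>\gamma\mid\XX,\YY}$ in the two regimes, and the collapse of the half-sum to $\frac{M-1}{2}\pr{a>\gamma,b>\gamma}$) all check out.

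Two caveats. First, the identity $\EE[e^{i(\bar\XX;\YY)}\mid\XX,\YY]=1$ behind your Markov step, and to a lesser extent the swap symmetry $\pr{\XX=\xx,\bar\XX=\bar\xx}=\pr{\XX=\bar\xx,\bar\XX=\xx}$, hold for a PIE but do not follow from Definition~\ref{def:RCSE} alone: for a general RCSE (e.g.\ the cyclic-shift linear ensemble) the conditional law of $\bar\XX$ given $\XX=\xx$ need not equal $P$, so the conditional mean of $e^{i(\bar\XX;\YY)}$ need not be $1$. Your argument therefore proves the proposition for PIE (the case actually plotted in the paper), whereas the statement is asserted for any RCSE; extending it would require replacing the Markov step. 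Second, $M=2$ needs separate handling since then $\frac{M-1}{2}<1$ and the saturation on $\{a\leq\gamma\}$ fails. Neither caveat changes the main conclusion: the term-by-term comparison underlying the paper's own proof does not hold, and your change-of-measure argument is the appropriate repair.
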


\begin{proof} Recalling Remark~\ref{remark:pairwise}, the RCU$^*$ bound reflects optimal (ML) pairwise decision. Thus, necessarily the pairwise error probabilities satisfy:
\begin{align}
Z + \frac{W}{2} \geq Z_q + \frac{W_q}{2}.
\end{align}

\end{proof}

\begin{remark} In fact, the case of the BEC, where $\epsilon_{\textrm{RCU}^*}=\epsilon_\textrm{TU}=\min\left(1,\epsilon_\textrm{DT}\right)$ is very special. In the BEC, an impostor cannot have a higher score than the true codeword; if the channel realization is such that the non-erased elements of $\xx$ and $\bar \xx$ are equal, then $i(\bar\xx;\yy)=i(\xx;\yy)$, otherwise $i(\bar\xx;\yy)=-\infty$. Thus,
\begin{align}
\label{eq:Pe_RCU_star}
\epsilon_{\textrm{RCU}^*} = \EE_{\XX,\YY} \left[ \min \left\{
1,
\frac { (M-1) W}{2} \right\} \right].
\end{align}
Let $k$ be the number of non-erased symbols out of the block of $n$, then $W=2^{-k}$. Consequently, $(M-1)W/2>1$ if and only if $i(\xx,\yy)<\gamma$, where $\gamma$ is given by~\eqref{eq:gamma}.
\end{remark}

\subsection{Performance Comparison}

Comparison of the different union bounds is given in Figure~\ref{fig:PIE_bounds}. In particular, the effect of tie-breaking on the bounds is shown by the comparison of the RCU bound~\eqref{eq:RCU} and the RCU$^*$ bound~\eqref{eq:RCU_star}.
Notice that this bound depends on the ensemble. Due to Lemma~\ref{lemma:additive}, the computation of the RCU and RCU$^*$ bounds for PIE becomes simple, hence show the bounds for this ensemble. Since an i.i.d. ensemble is also PIE, the exact error probability for i.i.d. ensemble~\eqref{PPV_exact} is given as a reference.

\begin{figure}
\centering
\includegraphics[width=\columnwidth]{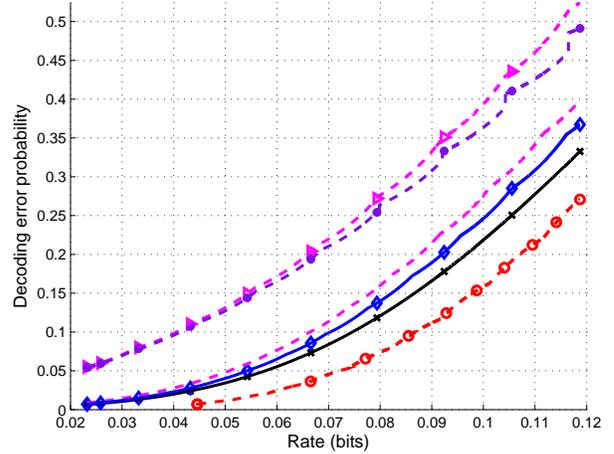} 
\caption{The effect of tie-breaking on the performance of PIE codebooks of the different union-bounds. We demonstrate the effect using a BSC with crossover probability $0.3$, at blocklength $n=100$.
The triangle-marked dashed curve is the DT bound~\eqref{eq:DT}.
The asterisk-marked dashed curve is the TU bound~\eqref{eq:TU}.
The dashed curve is the RCU bound~\eqref{eq:RCU}.
The diamond-marked solid curve is the RCU$^*$ bound~\eqref{eq:RCU_star}.
For reference, we repeat two of the curves of Figure~\ref{fig:iid_bounds}. The $\times$-marked solid curve is the exact performance of the i.i.d. ensemble~\eqref{PPV_exact}, while the circle-marked dashed curve is the lower bound for any codebook~\cite[Theorem 16]{PolyanskiyPVFiniteLength10}. The non-smoothness of some of the curves is not an artifact, but comes from the fact that they involve integers.}
\label{fig:PIE_bounds}
\end{figure}

\section{Linear Codes} \label{sec:linear}

Most known good codes are linear. Beyond that, linear codes have an important role for two reasons. First, they allow to improve performance (both capacity and error probability) in many network problems (see,  e.g.,~\cite{ComputeForward,HaimKochmanErez:2012:Full}). Second, for some channels, the average error probability and maximal error probability coincide for linear codes.

\subsection{The Dithered Linear Codes Ensemble}

For any finite field $\mfield{F}_q$ of cardinality $q$, we define the dithered linear codes ensemble by
\begin{align} \XX_j = H \ww_j + \DD. \end{align}
Here, all operations are defined over the field,
all elements of the $n\times k$ generator matrix $H$ and length-$k$ dither vector $\DD$ are drawn uniformly and independently from the field elements, and $\{\ww_j\}$ are all $k$-vectors over the field. It follows that the codebook size is $M = q^k$. An important special case is when $r$ is prime, and modulo arithmetic is used, e.g., binary (dithered) linear codes.

By~\cite{Dobrushin63}, any dithered linear codes ensemble over this field is PIE. Consequently, the RCU$^*$ bound applies to this ensemble. Further, it is proven in \cite[Appendix A]{PolyanskiyPVFiniteLength10} that for a class of channels, which includes the BSC and the BEC, there exists a \emph{randomized} ML decoder such that the \emph{maximal} error probability $\overline \epsilon$ \eqref{eq:maximal} coincides with the average one. 

\subsection{Additive Channels}

We now restrict our attention to channels that are additive, in the sense of Definition~\ref{definition:additive}. Further, assume that the channels are additive over a finite field, which is the same field over which the code is linear. Clearly, in this situation the dither does not change the distance profile of the codebook, thus it suffices to consider the linear codes ensemble
\begin{align} \XX_j = H \ww_j, \end{align} where again $H$ is i.i.d. uniform over $\mfield{F}_q$. More importantly, in order to achieve good maximal error probability, there is no need to use randomized decoders.


\begin{theorem} For any channel that is additive over a finite field, for an ensemble of linear codes over the field, there exists a deterministic decoder satisfying:
\[\overline \epsilon \leq \epsilon_{\textrm{RCU}^*} \]
\end{theorem}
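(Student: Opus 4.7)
My plan is to exhibit a specific deterministic ML decoder built from canonical coset leaders, and then use the group symmetry of the joint (linear code, additive noise) structure to show two things: first, that its maximal error equals its average error on every realization of the random generator $H$; and second, that its ensemble-average error coincides with that of the uniform-random-tie-breaking ML decoder, which by the RCU$^*$ theorem is already bounded by $\epsilon_{\textrm{RCU}^*}$.

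Fix once and for all a deterministic total order $\prec$ on $\mfield{F}_q^n$ (e.g., lexicographic). For every realization $H$, set $\mset{C}=\{H\ww : \ww \in \mfield{F}_q^k\}$, and for every coset $\mset{T}$ of $\mset{C}$ in $\mfield{F}_q^n$ define the canonical coset leader $\mathbf{n}^*(\mset{T})$ to be the $\prec$-smallest element among the $P_\NN$-maximizers of $\mset{T}$. The decoder, on input $\yy$, returns the $\hat m$ for which $\yy - \xx_{\hat m} = \mathbf{n}^*(\yy + \mset{C})$; by \eqref{eq:infromation_density_of_additive_channel}, maximizing the information density over $\mset{C}$ is the same as maximizing $P_\NN(\yy - \xx)$, so this is a genuine ML decoder with a specific deterministic tie-breaking. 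The symmetry observation is that when codeword $\xx_m$ is sent and noise $\NN$ is added, $\yy + \mset{C} = \NN + \mset{C}$ because $\xx_m \in \mset{C}$, so $\{\hat m \neq m\}$ is equivalent to $\{\NN \neq \mathbf{n}^*(\NN + \mset{C})\}$, an event that does not depend on $m$; hence $\overline{\epsilon}(H) = \epsilon(H)$ for every realization.

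Next, I compare $\epsilon(H)$ with the error $\epsilon_{\textrm{rand}}(H)$ of the uniform-random-tie-breaking ML decoder on the same code. Grouping contributions by cosets $\mset{T}$, and writing $\mset{M}(\mset{T})$ for the set of $P_\NN$-maximizers in $\mset{T}$ and $p(\mset{T})$ for the common value of $P_\NN$ on $\mset{M}(\mset{T})$, the probability that the randomized decoder is correct is $\sum_\mset{T} |\mset{M}(\mset{T})|\cdot p(\mset{T})/|\mset{M}(\mset{T})| = \sum_\mset{T} p(\mset{T})$, while the probability that the deterministic decoder is correct is $\sum_\mset{T} P_\NN(\mathbf{n}^*(\mset{T})) = \sum_\mset{T} p(\mset{T})$; hence $\epsilon(H) = \epsilon_{\textrm{rand}}(H)$ on every realization. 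Moreover, on an additive channel $\epsilon_{\textrm{rand}}(H)$ is unchanged if a dither $\DD$ is added to the codebook, since the dither cancels in every ML comparison (it is subtracted from $\yy$ before $P_\NN$ is evaluated). Therefore $\EE_H[\epsilon_{\textrm{rand}}(H)] = \EE_{H,\DD}[\epsilon_{\textrm{rand}}(H,\DD)]$ equals the ensemble-average error of the dithered linear ensemble; since the latter is PIE, the RCU$^*$ theorem gives $\EE_H[\overline{\epsilon}(H)] = \EE_H[\epsilon(H)] = \EE_{H,\DD}[\epsilon_{\textrm{rand}}(H,\DD)] \leq \epsilon_{\textrm{RCU}^*}$, so some realization $H$ satisfies the claim with this deterministic decoder.

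The main obstacle I expect is the bookkeeping around the dither: one must verify carefully that on an additive channel the uniform-random ML error is invariant to the dither \emph{pointwise} (not merely in expectation over $\DD$), so that the RCU$^*$ bound proved for the dithered PIE ensemble transfers cleanly to the undithered linear ensemble alongside the symmetric, deterministic coset-leader decoder.
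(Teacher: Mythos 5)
Your proposal is correct and follows essentially the same route as the paper: both construct a deterministic decoder by choosing one ML-optimal leader per coset of the linear code and translating it by the codewords, use the resulting coset symmetry to show the error probability is identical for every message, and then invoke the ensemble RCU$^*$ bound. The only difference is that you make explicit two steps the paper leaves implicit --- that the deterministic and uniformly-randomized ML decoders have the same average error (coset by coset), and that the dither cancels pointwise on an additive channel so the PIE-based bound transfers to the undithered linear ensemble --- which is a useful clarification rather than a new idea.
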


\begin{remark} Recall that the size of linear code is $M=q^k$ for an integer $k$. Thus, the theorem does not give $\bar \epsilon$ for all $n,M$. \end{remark}

\begin{proof}
Let $\Omega_1,\ldots,\Omega_M$ be a partition of the output space $\mset{Y}^n$ into decision regions (for any $1\leq m \leq M$, $\Omega_m$ is associated with codeword $m$). A partition is optimal in the average error probability sense, if and only if it satisfies: 
\begin{subequations}
\label{eq:Voronoi:general}
\begin{align}
\Omega_m &\subseteq \left\{ \yy \in \mset{Y}^n \mid \forall m' \neq m: V(\yy|\xx_m) \geq V(\yy|\xx_{m'}) \right\}
\\
\Omega_m &\supseteq \left\{ \yy \in \mset{Y}^n \mid \forall m' \neq m: V(\yy|\xx_m) > V(\yy|\xx_{m'}) \right\},
\end{align}
\end{subequations}
and for all $m \neq m'$ $\Omega_m \cap \Omega_{m'} = \emptyset$. 
By~\eqref{eq:infromation_density_of_additive_channel}, we have that for an additive channel,~\eqref{eq:Voronoi:general} is equivalent to
\begin{subequations}
\label{eq:Voronoi:additive_channel}
\begin{align} \label{eq:Voronoi:additive_channel_a}
\Omega_m &\subseteq \left\{ \yy \in \mfield{F}_q^n \mid \forall m' \neq m: P_\NN(\yy-\xx_m) \geq P_\NN(\yy-\xx_{m'}) \right\}. 
\\
\Omega_m &\supseteq \left\{ \yy \in \mfield{F}_q^n \mid \forall m' \neq m: P_\NN(\yy-\xx_m) > P_\NN(\yy-\xx_{m'}) \right\}.
\end{align}
\end{subequations}
Since for any such optimal partition $\epsilon\leq\epsilon_{\textrm{RCU}^*}$, it is sufficient to show that there exists a partition satisfying \eqref{eq:Voronoi:additive_channel} for which $\bar\epsilon=\epsilon$.

Let $\mset{C} \subseteq \mfield{F}_q^n$ be a linear code, and without loss of generality assume that $\xx_1=\mvec{0}$. We construct decoding regions for $\mset{C} $ in the following way. Recall that $\mset{C}$ induces a (unique) partition of the linear space $\mfield{F}_q^n$ into disjoint cosets $\mset{S}_1,\ldots,\mset{S}_{n-k}$, where each coset is a translation of the linear code. For each coset $\mset{S}_j$ let the coset leader $\yy_{j,1}$ be some word that may belong to $\Omega_1$ according to \eqref{eq:Voronoi:additive_channel_a}. By definition, all coset elements of the coset can be uniquely labeled as 
\[\yy_{j,m} = \yy_{j,1} + \xx_m. \] Assign each word $\yy_{j,m}$ to $\Omega_m$. It then satisfies that for all $m'=1,\ldots,M$
\begin{subequations}
\begin{align}
P_\NN(\yy_{j,m}-\xx_m) &= P_\NN(\yy_{j,1})
\\\label{eq:in_proof} &\geq P_\NN(\yy_{j,1}-(\xx_{m'}-\xx_{m}))
\\&= P_\NN(\yy_{j,m}-\xx_{m'}),
\end{align}
\end{subequations}
where in \eqref{eq:in_proof} we have used the facts that $\yy_{j,1}$ satisfies  \eqref{eq:Voronoi:additive_channel_a} and that the sum of codewords is a codeword. It follows, that the partition indeed satisfies \eqref{eq:Voronoi:additive_channel}. To see that $\bar\epsilon=\epsilon$, we have that for all $m=1,\ldots,M$:
\begin{subequations}
\begin{align}
\pr{\YY \in \Omega_m \mid \XX=\xx_m} &= \pr{\XX+\NN \in \Omega_m \mid \XX=\xx_m}
\\&= \pr{\NN \in \Omega_m-\xx_m \mid \XX=\xx_m}
\\\label{eq:in_proof_a} &= \pr{\NN \in \Omega_1 \mid \XX=\xx_m}
\\\label{eq:in_proof_b} &= \pr{\NN \in \Omega_1}
\\&= \pr{\xx_1+\NN \in \Omega_1}
\\&= \pr{\YY \in \Omega_1 \mid \XX=\xx_1},
\end{align}
\end{subequations}
where \eqref{eq:in_proof_a} is due to the construction of $\Omega_m$, and \eqref{eq:in_proof_b} is since the noise is statistically independent of the channel input.
\end{proof}

\begin{remark} The partition used in the proof is not unique, in the sense that for some cosets the choice of the coset leader is arbitrary. However, for any such choice the coset is partitioned in a fair way between the decision regions.
 \end{remark}

\bibliographystyle{IEEEtran}
\bibliography{elih}

\end{document}